\newcommand{\1}{\mbox{1}\hspace{-0.25em}\mbox{l}}
\newtheorem{theorem}{Theorem}
\newtheorem{lemma}{Lemma}
\newtheorem{proposition}{Proposition}
\begin{document}

%\preprint{APS/123-QED}

\title{
Operational characterization of incompatibility of quantum channels \\
with quantum state discrimination
}

\author{Junki Mori}
 \email{mori.junki.36u@st.kyoto-u.ac.jp}
\affiliation{%
 Department of Nuclear Engineering, Kyoto University, 6158540 Kyoto, Japan
}%

\begin{abstract}
  A collection of quantum channels is called incompatible if they cannot be obtained as 
  marginals from a single channel. No-cloning theorem is the most prominent instance of incompatibility of quantum channels. We show that every collection of incompatible channels can be more useful than compatible ones as preprocessings in a state discrimination task with multiple ensembles of states. This is done by showing that
  the robustness of channel incompatibility which is a measure for incompatibility of channels exactly quantifies the maximum advantage in the state discrimination.
  We also show that incompatibility of quantum measurement and channel has a similar 
  operational interpretation. Finally, we demonstrate that our result with respect 
  to channel incompatibility includes all other kinds of incompatibility as special cases.    
\end{abstract}

\pacs{Valid PACS appear here}% PACS, the Physics and Astronomy
                             % Classification Scheme.
%\keywords{Suggested keywords}%Use showkeys class option if keyword
                              %display desired
\maketitle

%\tableofcontents

\section{Introduction}

Incompatibility is one of the fundamental features in quantum theory \cite{Teiko2016}.
It is well known that there exist quantum measurements which cannot be implemented simultaneously, and such measurements are called incompatible.
As only incompatible measurements can lead to the violation of a Bell's inequality \cite{Wolf}, 
incompatibility of measurements is not an obstacle but rather an advantage.
Therefore, it is natural to consider incompatibility in the resource theoretic perspective \cite{Chitambar}. 
In fact, as with the robustness of entanglement, which is one of the 
measures for entanglement \cite{Vidal,Steiner}, the robustness of measurement incompatibility has been studied \cite{Erkka,RoopeOne,SkrzypczkAll,RoopeQuanti}. 
In particular, Ref.~\cite{SkrzypczkAll,RoopeQuanti,CarmeliQuan} showed that every collection of 
incompatible measurements can give an advantage over all compatible ones in a quantum 
state discrimination task with multiple ensembles of states. Furthermore, in Ref.~\cite{SkrzypczkAll,RoopeQuanti} it was also 
shown that the maximum advantage is exactly quantified by the robustness of measurement 
incompatibility. These results are among the recent results in the 
resource theories of states \cite{PinaniNece,Napoli,PianiRob,Bae,TakagiOpe,TakagiGene}, measurements \cite{TakagiGene,SkrzypczkRob,Oszmaniec2019operational}, and channels \cite{TakagiGene}, which showed that the robustness measures have an operational meaning as advantages in some discrimination tasks.

Incompatibility can be defined not only for measurements but also for channels \cite{Teiko2016,Teiko2017}.
A collection of channels is said to be incompatible if they cannot be obtained from a single channel.
For example, No-cloning theorem results from the incompatibility of the two identity 
channels.
The robustness of channel incompatibility can also be defined \cite{Erkka}. 
In this paper, we show that the robustness of channel incompatibility exactly quantifies 
the advantage provided by an incompatible collection of channels in a particular state discrimination task.
Our result gives an operational interpretation to incompatibility of quantum channels as well as incompatibility of quantum measurements in terms of state discrimination tasks.
Moreover, our result can be applied to measurement incompatibility via quantum-to-classical channels. We also show that a similar relation holds for incompatibility of quantum measurement and channel. 

This paper is organized as follows. In Sec.~\ref{pre}, we review briefly quantum incompatibility and introduce the state discrimination task with multiple ensembles of states originally introduced in Ref.~\cite{ballester2008state}.  We state our main theorems in Sec.~\ref{main}. In Sec.~\ref{conc}, we give a summary.

\section{preliminaries}
\label{pre}

\subsection{quantum incompatibility}
 In this section, we introduce the concept of incompatibility of quantum measurements, quantum channels, and a pair of a quantum measurement and a quantum channel.

 Let $\mathcal{H}$ and $\mathcal{K}$ be finite-dimensional Hilbert spaces.
 A measurement on a physical system on $\mathcal{H}$ with a finite outcome set $\Omega=\{1,\ldots ,o\}$ is described as a positive operator valued measure (POVM) 
 $\mathbb{M}=\{M_i\}_{i=1}^{o}$, where each element $M_i$ is a positive-semidefinite operator on $\mathcal{H}$ satisfying 
 $\sum_{i}M_i = \1$. We denote the set of POVMs on $\mathcal{H}$ with an outcome set $\Omega$ by $\mathfrak{P}(\Omega,\mathcal{H})$.
 A collection of $n$ measurements $\{\mathbb{M}_x\}_x=\{M_{i|x}\}_{i,x}$ is said to be compatible if there exist a joint measurement $\mathbb{G}=\{G_{\lambda}\}$ and conditional probability distributions $p(i|x,\lambda)$ such that
 \begin{align}
	 M_{i|x} = \sum_{\lambda} p(i|x,\lambda) G_{\lambda} \quad \forall \ i,x.
 \end{align} 
 Otherwise, the collection is called incompatible.
 We denote the set of collections of $n$ POVMs which are elements of $\mathfrak{P}(\Omega,\mathcal{H})$ by $\mathfrak{P}^{n}(\Omega,\mathcal{H})$, and the subset of compatible ones by $\mathfrak{P}^{n}_{\rm com}(\Omega,\mathcal{H})$.

 A quantum channel which transforms a state on $\mathcal{H}$ into a state on $\mathcal{K}$ is a completely positive 
 trace-preserving map $\Lambda \colon \mathcal{S}(\mathcal{H}) \rightarrow \mathcal{S}(\mathcal{K})$, where $\mathcal{S}(\mathcal{H})$ (respectively $\mathcal{S}(\mathcal{K})$) is the set of states on $\mathcal{H}$ (respectively $\mathcal{K}$). 
 We denote the set of channels $\Lambda \colon \mathcal{S}(\mathcal{H}) \rightarrow \mathcal{S}(\mathcal{K})$ by 
 $\mathfrak{C}(\mathcal{H}, \mathcal{K})$. A collection of $n$ channels $\{\Lambda_x\}_{x}$ ($\Lambda_x\in\mathfrak{C}(\mathcal{H}, \mathcal{K}_x)$) is said to be compatible if there exists a channel 
 $\Lambda \in \mathfrak{C}(\mathcal{H}, \mathcal{K}_1 \otimes \cdots \otimes \mathcal{K}_n)$ such that
 \begin{align}
 \label{cha}
	 \Lambda_x = {\rm Tr}_{\overline{\mathcal{K}_x}}[\Lambda(\ \cdot\ )] \quad \forall \ x,
 \end{align}
 where ${\rm Tr}_{\overline{\mathcal{K}_x}}$ means taking the partial trace except for the Hilbert space $\mathcal{K}_x$.
 Otherwise, the collection is called incompatible.
 Throughout the paper we restrict ourselves to collections of channels that have the same output space
 for simplicity. We denote the set of collections of $n$ channels which are elements of $\mathfrak{C}(\mathcal{H},\mathcal{K})$ 
 by $\mathfrak{C}^{n}(\mathcal{H},\mathcal{K})$, and the subset of compatible ones by 
 $\mathfrak{C}^{n}_{\rm com}(\mathcal{H},\mathcal{K})$.
 
 Finally, we introduce the concept of incompatibility of quantum measurement and channel.
 A measurement process is mathematically described by an instrument 
 $\mathcal{I}=\{\mathcal{I}_i\}_{i\in\Omega}$, where each $\mathcal{I}_i$ is a completely 
 positive trace-non-increasing map and $\sum_{i}\mathcal{I}_i$ is a channel.
 The probability of getting the outcome $i$ given the initial state $\rho$ is $p_{\rho}(i)={\rm Tr}[\mathcal{I}_i(\rho)]$ and the conditioned state is $\mathcal{I}_i(\rho)/p_{\rho}(i)$.
 We denote the set of instruments that have the outcome set $\Omega$ and transform a state 
 on $\mathcal{H}$ into a state on $\mathcal{K}$ by $\mathfrak{I}(\Omega,\mathcal{H},\mathcal{K})$. A pair of a measurement and a channel 
 $\{\mathbb{M},\Lambda\} \in \mathfrak{P}(\Omega,\mathcal{H})\times\mathfrak{C}(\mathcal{H},\mathcal{K})$ is compatible if there exists an instrument 
 $\mathcal{I} \in \mathfrak{I}(\Omega,\mathcal{H},\mathcal{K})$ such that 
 \begin{align}
 	M_i=\mathcal{I}_{i}^{*}(\1),\quad \Lambda(\rho)=\sum_{i}\mathcal{I}_i(\rho) 
 	\quad \forall \ i,\rho,
 \end{align}
 where $\mathcal{I}_{i}^{*}$ is the Heisenberg picture of $\mathcal{I}_i$. 
 Otherwise, the pair is called incompatible.
 
\subsection{state discrimination}
 In this paper, we consider the following two-party state discrimination task as in Ref.~\cite{SkrzypczkAll,RoopeQuanti,CarmeliQuan}, originally introduced in \cite{ballester2008state}. Bob can prepare $n$ defferent ensembles $\{\mathcal{E}_x\}_x$ labeled by $x$, where each ensemble consists of $o$ quantum states $\mathcal{E}_x=\{p(i|x), \rho_{i|x}\}_{i=1}^{o}$. First, Bob chooses a label $x$ with probability $p(x)$. According to the label $x$, Bob picks one of the states $\rho_{i|x}$ from the ensemble $\mathcal{E}_x$ with probability $p(i|x)$ and sends Alice his choice $x$ and the state $\rho_{i|x}$.  After receiving the state and the value of $x$, Alice aims at correctly guessing $i$. In this setting, since Alice knows which ensemble was chosen, she can change her operation performed on the state to guess correct $i$ according to 
 the chosen ensemble. We then focus on the property of the collection of operations such as measurements and state transformations that Alice performs on the states to discriminate different ensembles $\{\mathcal{E}_x\}$. For example, we consider two qubit state ensembles $\mathcal{E}_1$ and $\mathcal{E}_2$. We fix an orthonormal basis $\{\ket{0},\ket{1}\}$ and suppose that $\mathcal{E}_1$ consists of two pure states $\ket{0}$, $\ket{1}$ chosen uniformly at random and $\mathcal{E}_2$ consists of two pure states $\ket{+}$, $\ket{-}$ chosen uniformly at random where $\ket{\pm}=\frac{1}{\sqrt{2}}(\ket{0}\pm\ket{1})$, as is the case in the BB84 protocol in quantum key distribution. If Alice performs the two projective measurements $\mathbb{M}_1=\{|0\rangle\langle 0|,|1\rangle\langle 1|\}$ and $\mathbb{M}_2=\{|+\rangle\langle +|,|-\rangle\langle -|\}$ on the states of the ensembles $\mathcal{E}_1$ and $\mathcal{E}_2$, respectively, she can perfectly discriminate $\mathcal{E}_1$ and $\mathcal{E}_2$. We note that the two measurements $\mathbb{M}_1$ and $\mathbb{M}_2$ are incompatible.
    
 In the case where Alice has access to only a fixed collection of measurements $\{\mathbb{M}_x\}_x \in \mathfrak{P}^{n}(\Omega,\mathcal{H})$, she performs a measurement $\mathbb{M}_x$ after receiving the value of $x$ as is the case in the above example. In this setting, the average probability that she successfully obtains the correct guess is given by
 \begin{align}
     P_{\rm succ}(\mathscr{A}, \{\mathbb{M}_x\})=\sum_{x,i}p(x)p(i|x){\rm Tr}[\rho_{i|x}M_{i|x}],
 \end{align}
 where $\mathscr{A}=\{p(x), \mathcal{E}_x\}$. Ref.~\cite{SkrzypczkAll,RoopeQuanti} showed that the 
 maximum advantage of $\{\mathbb{M}_x\}$ over compatible measurements is quantified as
 \begin{align}
  \label{measurement}
    \max_{\mathscr{A}}
    \dfrac{
    	P_{\rm succ}(\mathscr{A}, \{\mathbb{M}_x\})}
    {\displaystyle\max_{\{\mathbb{N}_x\} \in \mathfrak{P}^{n}_{\rm com}(\Omega,\mathcal{H})}
    	P_{\rm succ}(\mathscr{A}, \{\mathbb{N}_x\})} = 1+R_M(\{\mathbb{M}_x\}),
 \end{align}
 where $R_M(\{\mathbb{M}_x\})$ is the robustness of measurement incompatibility, 
 which is a geometric quantifier for incompatibility of measurements defined as the 
 minimum amount of noise which has to be added to $\{\mathbb{M}_x\}$ to make them compatible. It is formally defined as   
 \begin{align}
 	R_M(\{\mathbb{M}_x\}) = &\min{s} \notag \\[2pt]
 	                      & \ \ {\rm s.t.}\  
 	                      \left\{\frac{\mathbb{M}_x + s\mathbb{M}_x'}{1+s}\right\}
 	                      \in\mathfrak{P}^{n}_{\rm com}(\Omega,\mathcal{H}), \notag \\[5pt] 
 	                      &\qquad \ \ \{\mathbb{M}_x'\}\in\mathfrak{P}^{n}(\Omega,\mathcal{H}),
 \end{align}
where the minimization is over $s\geq0$ and $\{\mathbb{M}_x'\}$.

\section{main results}
\label{main}
\subsection{robustness of channel incompatibility}
 In this section, we prove that a similar relation to Eq.~(\ref{measurement}) holds for 
 channels by using conic programming as with Ref.~\cite{RoopeQuanti,TakagiGene}. 
 In particular, we utilize the method of Ref.~\cite{TakagiGene} with respect to the resource theory 
 of a single channel.

 For any collection of channels 
 $\{\Lambda_x\} \in \mathfrak{C}^{n}(\mathcal{H},\mathcal{K})$, the robustness of channel incompatibility $R_C(\{\Lambda_x\})$ is defined in the same way as the 
 robustness of measurement incompatibility, that is, 
 \begin{align}
   \label{robust}
    R_C(\{\Lambda_x\}) = &\min{s} \notag \\[2pt]
                         & \ {\rm s.t.}\  
                         \left\{\frac{\Lambda_x + s\Lambda'_x}{1+s}\right\}
                         \in \mathfrak{C}^{n}_{\rm com}(\mathcal{H},\mathcal{K}), 
                         \notag \\[5pt] 
                         &\qquad \ \{\Lambda_x'\}\in
                         \mathfrak{C}^{n}(\mathcal{H},\mathcal{K}), 
 \end{align}
 where the minimization is over $s\geq0$ and $\{\Lambda_x'\}$.
 We note that the channel $\Phi_x=\frac{\Lambda_x + s\Lambda'_x}{1+s}$ is defined as
 \begin{align}
 	\Phi_x(\rho):=\frac{\Lambda_x(\rho) + s\Lambda'_x(\rho)}{1+s}\quad \forall\rho\in\mathcal{S}(\mathcal{H}).
 \end{align}
 In order to make the 
 robustness conected to the state discrimination, we consider the dual 
 problem equivalent to the optimization problem Eq.~(\ref{robust}). For that purpose, we utilize the Choi-Jamiolkowski isomorphism. $\mathfrak{C}(\mathcal{H}, \mathcal{K})$ is isomorphic to a subset  
 $\mathcal{J}(\mathcal{K}\otimes\mathcal{H}) \subset \mathcal{S}(\mathcal{K}\otimes\mathcal{H})$ with its elements satisfying ${\rm Tr}_{\mathcal{K}}[J]=\1_{\mathcal{H}}/d$, where $d$ is the dimension of $\mathcal{H}$. This isomorphism is defined by 
 the mapping $\Lambda \mapsto J_{\Lambda}=(\Lambda\otimes{\rm id})(|\Psi_+\rangle\langle\Psi_+|)$, where $\ket{\Psi_+}\in\mathcal{H}\otimes\mathcal{H}$ is the maximally entangled state
 and ${\rm id}$ denotes the identity channel. Let $\mathcal{J}^{n}(\mathcal{K}\otimes\mathcal{H})$ be the set of collections of $n$ Choi matrices and $\mathcal{J}^{n}_{\rm com}(\mathcal{K}\otimes\mathcal{H})$ be the subset of  $\mathcal{J}^{n}(\mathcal{K}\otimes\mathcal{H})$ with its elements corresponding to collections of channels in $\mathfrak{C}_{\rm com}^{n}(\mathcal{H}, \mathcal{K})$. Then, the robustness is rewritten as
 \begin{align}
   \label{robust2}
    R_C(\{\Lambda_x\}) = &\min{s} \notag \\[2pt]
                       & \ \ {\rm s.t.}\  
                       \left\{\frac{J_{\Lambda_x} + sJ_{\Lambda'_x}}{1+s}\right\}\in\mathcal{J}^{n}_{\rm com}(\mathcal{K}\otimes\mathcal{H}) \notag \\[5pt] 
                       &\qquad \ \ 
                       \{J_{\Lambda'_x}\}\in\mathcal{J}^{n}(\mathcal{K}\otimes\mathcal{H}).          
 \end{align}
 
 Since $\mathfrak{C}_{\rm com}^{n}(\mathcal{H}, \mathcal{K})$ is convex and compact, 
 $\mathcal{J}^{n}_{\rm com}(\mathcal{K}\otimes\mathcal{H})$ is also convex and compact by the continuity of the mapping $\Lambda \mapsto J_{\Lambda}$. The compactness of 
 $\mathfrak{C}_{\rm com}^{n}(\mathcal{H}, \mathcal{K})$ follows from the compactness of 
 $\mathfrak{C}(\mathcal{H}, \mathcal{K} \otimes \cdots \otimes \mathcal{K})$ 
 and the continuity of the mapping $\mathfrak{C}(\mathcal{H}, \mathcal{K} \otimes \cdots \otimes \mathcal{K}) \ni \Lambda \mapsto \{\Lambda_x\} \in \mathfrak{C}^{n}(\mathcal{H}, \mathcal{K})$, where $\{\Lambda_x\}$ are defined via Eq.~(\ref{cha}). Hence, we can use the duality theory of optimization problems \cite{boyd2004convex}, and obtain an equivalent formulation of the robustness.
 We present the proof in Appendix \ref{A}.
 \begin{lemma} \label{lemma1} 
 	For any collection of channels $\{\Lambda_x\}\in\mathfrak{C}^{n}           (\mathcal{H}, \mathcal{K})$, the robustness of channel incompatibility $R_C(\{\Lambda_x\})$ can be expressed as follows:
	\begin{align}
	\label{lemma}
	  R_C(\{\Lambda_x\}) = &\max_{\{A_x\}}{\sum_{x}{\rm Tr}[A_x J_{\Lambda_x}]-1} 
	                     \notag \\[2pt]
	         {\rm s.t.}\  &\ A_x\geq 0 \quad \forall x,  \notag \\[2pt] 
	                       &\sum_{x}{\rm Tr}[A_x J_{\Phi_x}]\leq 1 
	                       \quad \forall\{\Phi_x\}\in
	                       \mathfrak{C}^{n}_{\rm com}(\mathcal{H}, \mathcal{K}),
	\end{align} 
	where $\{A_x\}$ are operators on $\mathcal{K}\otimes\mathcal{H}$.
 \end{lemma}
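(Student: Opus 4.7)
My plan is to cast the primal Eq.~(\ref{robust2}) as a conic minimization and obtain Eq.~(\ref{lemma}) via Lagrange duality. Introduce the convex cone
\begin{align}
\mathcal{C} := \left\{ t\{J_{\Phi_x}\}_x : t\geq 0,\ \{\Phi_x\}\in\mathfrak{C}_{\rm com}^{n}(\mathcal{H},\mathcal{K})\right\},
\end{align}
i.e., the conic hull of $\mathcal{J}_{\rm com}^{n}(\mathcal{K}\otimes\mathcal{H})$. Under the substitution $L_x:=(1+s)J_{\Phi_x}$, the feasibility conditions of Eq.~(\ref{robust2}) become simply $\{L_x\}\in\mathcal{C}$ together with $L_x\geq J_{\Lambda_x}$: the latter encodes $sJ_{\Lambda'_x}=L_x-J_{\Lambda_x}\geq 0$, while the partial-trace condition on $J_{\Lambda'_x}$ follows automatically because ${\rm Tr}_{\mathcal{K}}[L_x]=t\1/d$ and ${\rm Tr}_{\mathcal{K}}[J_{\Lambda_x}]=\1/d$. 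Since ${\rm Tr}[J_{\Phi_x}]=1$ gives ${\rm Tr}[L_x]=t=1+s$, the primal can be rewritten as
\begin{align}
R_C(\{\Lambda_x\}) + 1 = \min_{\{L_x\}\in\mathcal{C}}\frac{1}{n}\sum_{x}{\rm Tr}[L_x]\quad\text{s.t.}\ L_x\geq J_{\Lambda_x}\ \forall x.
\end{align}

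Next, I form the Lagrangian with positive semidefinite multipliers $\{A_x\}$ for the constraints $L_x-J_{\Lambda_x}\geq 0$:
\begin{align}
\mathscr{L}=\sum_x{\rm Tr}[A_xJ_{\Lambda_x}]+\sum_x{\rm Tr}\!\left[\left(\tfrac{1}{n}\1-A_x\right)L_x\right].
\end{align}
Parametrizing elements of $\mathcal{C}$ as $L_x=tJ_{\Phi_x}$, the second term collapses to $t\bigl(1-\sum_x{\rm Tr}[A_xJ_{\Phi_x}]\bigr)$. The infimum over $t\geq 0$ and compatible $\{\Phi_x\}$ is therefore $-\infty$ unless $\sum_x{\rm Tr}[A_xJ_{\Phi_x}]\leq 1$ for every $\{\Phi_x\}\in\mathfrak{C}_{\rm com}^{n}(\mathcal{H},\mathcal{K})$, and under that dual-feasibility condition the infimum is $0$ (attained at $t=0$), leaving $\sum_x{\rm Tr}[A_xJ_{\Lambda_x}]$. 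Taking the supremum over $\{A_x\geq 0\}$ produces exactly the right-hand side of Eq.~(\ref{lemma}) shifted by $+1$.

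To turn weak into strong duality I verify Slater's condition. Choose any channel $\Phi^{(0)}$ with full-rank Choi operator (e.g., a depolarizing channel); then the trivially compatible assignment $\Phi_x=\Phi^{(0)}$ for all $x$, scaled by sufficiently large $t$, yields $L_x^0=tJ_{\Phi^{(0)}}>J_{\Lambda_x}$ strictly for every $x$. Combined with the closedness of $\mathcal{C}$---which follows from the compactness of $\mathcal{J}_{\rm com}^{n}(\mathcal{K}\otimes\mathcal{H})$ noted in the preceding paragraph together with the fact that $0\notin\mathcal{J}_{\rm com}^{n}(\mathcal{K}\otimes\mathcal{H})$---this lets the standard conic duality theorem close the gap, giving Eq.~(\ref{lemma}) after subtracting $1$. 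The main subtlety in the plan is the reparametrization step: one must confirm that packaging both the positivity and partial-trace conditions on the noise Choi matrix $J_{\Lambda'_x}$ into the single cone constraint $\{L_x\}\in\mathcal{C}$ is faithful, so that no feasibility information is lost between the primal of Eq.~(\ref{robust2}) and the conic program that is actually dualized.
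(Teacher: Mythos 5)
Your route is essentially the paper's own: pass to Choi matrices, absorb the noise term into the cone generated by the compatible set, dualize the semidefinite constraints $L_x\geq J_{\Lambda_x}$ with multipliers $A_x\geq 0$, and read off dual feasibility as $\sum_x{\rm Tr}[A_xJ_{\Phi_x}]\leq 1$ over all compatible collections. The only structural difference is that you keep the cone constraint explicit and take the infimum of the Lagrangian over $\mathcal{C}$ directly, whereas the paper introduces a dual-cone multiplier $B_x$ and imposes $A_x+B_x=\1/n$; these are equivalent. Your explicit check that the reparametrization is faithful --- that positivity and the partial-trace normalization of $J_{\Lambda'_x}$ are both recovered from $\{L_x\}\in\mathcal{C}$ and $L_x\geq J_{\Lambda_x}$, with ${\rm Tr}[L_x]=t\geq 1$ forcing $s=t-1\geq 0$ --- is a point the paper glosses over, and it is correct.

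There is, however, a genuine error in your Slater verification. It is not true that the assignment $\Phi_x=\Phi^{(0)}$ for all $x$ is ``trivially compatible'' for an arbitrary channel $\Phi^{(0)}$ with full-rank Choi operator: a constant tuple of identical channels is compatible only if that channel is $n$-self-compatible (broadcastable), which is a strong restriction. For instance, the depolarizing channel $\rho\mapsto p\rho+(1-p)\1/d$ has a full-rank Choi operator for any $0<p<1$, yet $\{\Phi_p,\Phi_p\}$ is incompatible whenever $p>\frac{d+2}{2(d+1)}$ --- precisely the optimal-cloning bound invoked in Appendix~\ref{C}. Your argument is salvaged only by the specific instance the paper uses, the completely depolarizing channel $\Phi^{(0)}(\cdot)=\1_{\mathcal{K}}/d'$, which is genuinely $n$-self-compatible (take the joint channel $\rho\mapsto(\1_{\mathcal{K}}/d')^{\otimes n}$) and has Choi matrix $\1/(dd')$. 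You should also address why this tuple lies in the (relative) interior of $\mathcal{C}$, which is the part of Slater's condition attached to the cone constraint; strict feasibility of $L_x>J_{\Lambda_x}$ alone does not suffice to invoke strong duality for a conic program.
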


 We consider an entanglement-assisted state discrimination task, that is, a problem to discriminate ensembles of quantum states on the extended Hilbert space $\mathcal{H}\otimes\mathcal{H}$. In this task, Alice has 
 access to channels $\{\Lambda_x\}$ acting on a single system as well as joint measurements $\{\mathbb{M}_x\}$. After receiving the state and the value of $x$, Alice performs the preprocessing of the state by  $\Lambda_x\otimes{\rm id}$ and subsequently the measurement $\mathbb{M}_x$. Then, the average success probability is given by
 \begin{align}
    &P_{\rm succ}(\mathscr{A}, \{\mathbb{M}_x\}, \{\Lambda_x\otimes{\rm id}\}) \notag\\
    &\quad =\sum_{x,i}p(x)p(i|x){\rm Tr}[(\Lambda_x\otimes{\rm id})(\rho_{i|x})M_{i|x}],
 \end{align}
 where $\rho_{i|x}\in\mathcal{S}(\mathcal{H}\otimes\mathcal{H})$ and 
 $\{\mathbb{M}_x\}\in\mathfrak{P}^{n}(\Omega,\mathcal{K}\otimes\mathcal{H})$. 
 
 We are ready to state and prove our main theorem.
 \begin{theorem}
 	\label{thm1}
	For any collection of channels $\{\Lambda_x\}\in\mathfrak{C}^{n}(\mathcal{H}, \mathcal{K})$, 
	the robustness of channel incompatibility $R_C(\{\Lambda_x\})$ quantifies the maximum advantage:
	\begin{align}
	 &\max_{\mathscr{A}, \{\mathbb{M}_x\}}
	 \dfrac{P_{\rm succ}(\mathscr{A}, \{\mathbb{M}_x\}, \{\Lambda_x\otimes{\rm id}\})}{\displaystyle\max_{\{\Phi_x\}\in\mathfrak{C}^{n}_{\rm com}(\mathcal{H}, \mathcal{K})}P_{\rm succ}(\mathscr{A}, \{\mathbb{M}_x\}, \{\Phi_x\otimes{\rm id}\})} \notag \\[2pt]
	 &\qquad = 1+R_C(\{\Lambda_x\}),
	\end{align}
	where the maximization is over all ensembles of states $\mathscr{A}$ and all collections of measurements $\{\mathbb{M}_x\}$.
\end{theorem}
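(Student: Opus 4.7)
My plan is to prove the two inequalities separately, using the primal formulation of $R_C$ in Eq.~(\ref{robust}) for the $\le$ direction and the dual formulation from Lemma~\ref{lemma1} for the matching $\ge$ direction. Throughout let $R := R_C(\{\Lambda_x\})$. For the upper bound, apply Eq.~(\ref{robust}) to obtain $\{\Lambda'_x\} \in \mathfrak{C}^n(\mathcal{H},\mathcal{K})$ and a compatible collection $\{\Phi_x\} \in \mathfrak{C}^n_{\rm com}(\mathcal{H},\mathcal{K})$ such that $\Lambda_x = (1+R)\Phi_x - R\Lambda'_x$ for every $x$. Linearity of $P_{\rm succ}$ in each channel together with nonnegativity of $P_{\rm succ}(\mathscr{A},\{\mathbb{M}_x\},\{\Lambda'_x\otimes{\rm id}\})$ gives
\begin{equation*}
P_{\rm succ}(\mathscr{A},\{\mathbb{M}_x\},\{\Lambda_x\otimes{\rm id}\}) \le (1+R)\,P_{\rm succ}(\mathscr{A},\{\mathbb{M}_x\},\{\Phi_x\otimes{\rm id}\}).
\end{equation*}
Since $\{\Phi_x\} \in \mathfrak{C}^n_{\rm com}$, dividing by the maximum over compatible collections shows that the ratio is bounded by $1 + R$ for every $(\mathscr{A}, \{\mathbb{M}_x\})$.

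For the lower bound, I construct an explicit strategy from the dual optimizer. Let $\{A_x^*\}$ attain the maximum in Eq.~(\ref{lemma}), set $c_x := \|A_x^*\|_\infty$ (so $A_x^* \le c_x \1$) and $C := \sum_y c_y$, and put $p(x) := c_x/C$. Take the ensemble $\mathcal{E}_x$ to be concentrated on the maximally entangled state ($p(1|x) = 1$, $\rho_{1|x} = |\Psi_+\rangle\langle\Psi_+|$), and the POVM $\mathbb{M}_x$ with $M_{1|x} := A_x^*/c_x$, $M_{2|x} := \1 - M_{1|x}$; this is a legitimate POVM since $0 \le A_x^*/c_x \le \1$. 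The Choi identity ${\rm Tr}[(\Theta\otimes{\rm id})(|\Psi_+\rangle\langle\Psi_+|)\,Y] = {\rm Tr}[J_\Theta Y]$ then reduces the success probability for any $\{\Theta_x\}\in\mathfrak{C}^n(\mathcal{H},\mathcal{K})$ to
\begin{equation*}
P_{\rm succ}(\mathscr{A},\{\mathbb{M}_x\},\{\Theta_x\otimes{\rm id}\}) = \frac{1}{C}\sum_x {\rm Tr}[A_x^* J_{\Theta_x}].
\end{equation*}
Plugging in $\Theta_x = \Lambda_x$ returns $(1+R)/C$ by the dual objective in Lemma~\ref{lemma1}, while any compatible $\{\Phi_x\}$ gives at most $1/C$ by dual feasibility. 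Hence the ratio for this specific $(\mathscr{A},\{\mathbb{M}_x\})$ is at least $1 + R$, which combined with the upper bound proves the theorem.

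The main obstacle, as is typical for such resource-theoretic operational equivalences, is producing the lower-bound construction: the unnormalized PSD dual variables $\{A_x^*\}$ must be packaged as a bona fide ensemble--measurement pair so that the \emph{same} pair simultaneously realizes the dual objective on the given $\{\Lambda_x\}$ and lets the dual feasibility constraint control the denominator uniformly over all compatible $\{\Phi_x\}$. Choosing $|\Psi_+\rangle\langle\Psi_+|$ as the input is the decisive trick, since the Choi identity then routes both the numerator and the denominator through the single expression $\sum_x {\rm Tr}[A_x^* J_{(\cdot)_x}]$ that Lemma~\ref{lemma1} governs.
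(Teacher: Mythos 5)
Your proposal is correct and follows essentially the same route as the paper: the upper bound via the primal decomposition $\Lambda_x=(1+R)\Phi_x-R\Lambda'_x$ and positivity, and the lower bound via the same construction (dual optimizer normalized by its operator norm as a two-outcome POVM, weights $p(x)=\|A_x\|/\sum_y\|A_y\|$, and the ensemble concentrated on $|\Psi_+\rangle\langle\Psi_+|$ so that the Choi identity converts success probabilities into the dual objective and feasibility constraint). No substantive differences from the paper's argument.
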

	\begin{proof}
		We write simply $\mathfrak{C}^{n}_{\rm com}(\mathcal{H}, \mathcal{K})$ as $\mathfrak{C}^{n}_{\rm com}$ . By the definition of the robustness, there exists $\{\Phi_x'\}\in\mathfrak{C}^{n}_{\rm com}$ such that
		\begin{align}
		  (\Lambda_x\otimes{\rm id})(\rho)\leq(1+R_C(\{\Lambda_x\}))(\Phi_x'\otimes{\rm id})(\rho)
		\end{align}
		for all $x$ and $\rho\in\mathcal{S}(\mathcal{H}\otimes\mathcal{H})$. Therefore, we have
		\begin{align}
		   &P_{\rm succ}(\mathscr{A}, \{\mathbb{M}_x\}, \{\Lambda_x\otimes{\rm id}\}) \notag \\
		   &=\sum_{x,i}p(x)p(i|x){\rm Tr}[(\Lambda_x\otimes{\rm id})(\rho_{i|x})M_{i|x}] \notag\\
		   &\leq (1+R_C(\{\Lambda_x\}))\sum_{x,i}p(x)p(i|x){\rm Tr}[(\Phi_x'\otimes{\rm id})(\rho_{i|x})M_{i|x}] \notag\\
		   &\leq (1+R_C(\{\Lambda_x\}))\max_{\{\Phi_x\}\in\mathfrak{C}^{n}_{\rm com}}P_{\rm succ}(\mathscr{A}, \{\mathbb{M}_x\}, \{\Phi_x\otimes{\rm id}\}).
		\end{align}
		In order to prove the converse, we choose the measurements $\mathbb{M}_x=\left\{\frac{A_x}{\|A_x\|}, \1-\frac{A_x}{\|A_x\|}\right\}$, where $\{A_x\}$ is an optimal solution in Eq.~(\ref{lemma}), and the ensembles $\mathcal{E}_x=\{p(i|x), \rho_{i|x}\}^{2}_{i=1}$ with probability $p(x)=\frac{\|A_x\|}{\sum_{y}\|A_y\|}$, where $p(1|x)=1$, $\rho_{1|x}=|\Psi_+\rangle\langle\Psi_+|$, $p(2|x)=0$ and $\rho_{2|x}$ is arbitrary for all $x$. For this choice, it holds that 
		\begin{align}
		&\dfrac{P_{\rm succ}
			(\mathscr{A}, \{\mathbb{M}_x\}, \{\Lambda_x\otimes{\rm id}\})}
		{\displaystyle\max_{\{\Phi_x\}\in\mathfrak{C}^{n}_{\rm com}}P_{\rm succ}(\mathscr{A}, \{\mathbb{M}_x\}, \{\Phi_x\otimes{\rm id}\})} \notag\\
		&= \frac{\sum_{x}\frac{\|A_x\|}{\sum_{y}\|A_y\|}{\rm Tr}\left[(\Lambda_x\otimes{\rm id})(|\Psi_+\rangle\langle\Psi_+|)
			\frac{A_x}{\|A_x\|}\right]}
		{{\displaystyle\max_{\{\Phi_x\}\in\mathfrak{C}^{n}_{\rm com}}}
			\sum_{x}\frac{\|A_x\|}{\sum_{y}\|A_y\|}{\rm Tr}\left[(\Phi_x\otimes{\rm id})(|\Psi_+\rangle\langle\Psi_+|)
			\frac{A_x}{\|A_x\|}\right]} \notag\\
		&= \frac{\sum_{x}{\rm Tr}[A_x J_{\Lambda_x}]}{{\displaystyle\max_{\{\Phi_x\}\in\mathfrak{C}^{n}_{\rm com}}}\sum_{x}{\rm Tr}[A_x J_{\Phi_x}]} \notag\\[5pt]
		&\geq 1+R_C(\{\Lambda_x\}),
		\end{align}
		where the inequality follows from Lemma~\ref{lemma1}.
	\end{proof}
From the above theorem, we can see that a collection of channels is incompatible if 
and only if it offers an advantage over all compatible channels as 
preprocessings in a state discrimination task given certain ensembles and measurements. Furthermore, the maximum advantage in the state discrimination task becomes a measure quantifying how incompatible the collection of channels is. As we show later, this theorem includes incompatibility of measurements as a special case.

Unlike in the case of measurement incompatibility, we have considered the entanglement-assisted discrimination task in 
Theorem \ref{thm1}. However, it is not clear that entanglement is truly necessary. We show that entanglement is 
needed in general by giving a simple example. We consider two identity channels $\{{\rm id},{\rm id}\}\in\mathfrak{C}^{2}(\mathcal{H}, \mathcal{H})$. They are incompatible 
and the robustness $R_C(\{{\rm id},{\rm id}\})$ was calculated in \cite{Erkka} as 
\begin{align}
	1+R_C(\{{\rm id},{\rm id}\})=\frac{2d}{d+1}.
\end{align}
On the other hand, the maximum advantage provided by $\{{\rm id},{\rm id}\}$ in the entanglement-unassisted state discrimination task 
can be bounded as (see Appendix \ref{C})
\begin{align}
\label{entan}
	\max_{\mathscr{A}, \{\mathbb{M}_x\}}\dfrac{P_{\rm succ}
		(\mathscr{A}, \{\mathbb{M}_x\}, \{{\rm id},{\rm id}\})}
	{\displaystyle\max_{\{\Phi_x\}\in\mathfrak{C}^{2}_{\rm com}(\mathcal{H}, \mathcal{H})}P_{\rm succ}(\mathscr{A}, \{\mathbb{M}_x\}, \{\Phi_x\})}
	\leq \frac{2(d+1)}{d+3}.
\end{align}
 It is easy to show that the inequality
 \begin{align}
 \frac{2(d+1)}{d+3}<\frac{2d}{d+1}
 \end{align}
 holds for any $d\geq 2$. Therefore, entanglement is necessary in general 
 to characterize the robustness of channel incompatibility as the advantage in a particular state discrimination task.

\subsection{robustness of measurement and channel incompatibility}
In what follows, we write $\mathfrak{P}(\Omega,\mathcal{H})$ and $\mathfrak{C}(\mathcal{H}, \mathcal{K})$ as $\mathfrak{P}$ and $\mathfrak{C}$, respectively 
for simplicity. For any pair of a measurement and a channel $\{\mathbb{M},\Lambda\}\in
\mathfrak{P}\times\mathfrak{C}$, the robustness of measurement and channel incompatibility $R_{MC}(\{\mathbb{M},\Lambda\})$ is defined as
\begin{align}
  \label{pair}
     R_{MC}(\{\mathbb{M},\Lambda\}) = &\min{s} \notag \\[2pt]
                             {\rm s.t.}\  
                             &\left\{\frac{\mathbb{M} + s\mathbb{M}'}{1+s}, \frac{\Lambda + s\Lambda'}{1+s}\right\}
                             \in(\mathfrak{P}\times\mathfrak{C})_{\rm com},
                              \notag \\[5pt] 
                             &\ \{\mathbb{M}',\Lambda'\}\in\mathfrak{P}\times\mathfrak{C}, 
\end{align}
where the minimization is over $s\geq0$ and $\{\mathbb{M}',\Lambda'\}$, and 
$(\mathfrak{P}\times\mathfrak{C})_{\rm com}$ denotes the set of compatible pairs. 
Since $\mathfrak{I}$ is compact, $(\mathfrak{P}\times\mathfrak{C})_{\rm com}$ is also 
compact by the continuity of the mapping 
$\mathfrak{I} \ni \mathcal{I} \mapsto \{\{\mathcal{I}^{*}_i(\1)\},\sum_{i}\mathcal{I}_i\} \in \mathfrak{P}\times\mathfrak{C}$.
Hence, we can use the duality theory of optimization problems, and obtain an equivalent formulation of the robustness. We present the proof in Appendix \ref{B}.
 \begin{lemma} \label{lemma2} 
	For any pair of a measurement and a channel $\{\mathbb{M},\Lambda\}\in
	\mathfrak{P}(\Omega,\mathcal{H})\times\mathfrak{C}(\mathcal{H},\mathcal{K})$, the robustness of measurement and channel incompatibility $R_{MC}(\{\mathbb{M},\Lambda\})$ can be expressed as follows:
	\begin{align}
	\label{lem2}
	&R_{MC}(\{\mathbb{M},\Lambda\}) = \max_{\{A_i\},B}{\ \sum_{i}{\rm Tr}[A_i M_i]+{\rm Tr}[B J_{\Lambda}]-1} 
	\notag \\[2pt]
	&{\rm s.t.}\ \ A_i\geq 0 \quad \forall i, \quad B\geq 0,  \notag \\[2pt] 
	&\quad \sum_{i}{\rm Tr}[A_i N_i]+{\rm Tr}[BJ_{\Phi}]\leq 1 
	\quad \forall\{\mathbb{N},\Phi\}\in
	(\mathfrak{P}\times\mathfrak{C})_{\rm com}.
	\end{align} 
	where $\{A_i\}$ are operators on $\mathcal{H}$ and $B$ is an operator on $\mathcal{K}\otimes\mathcal{H}$.
\end{lemma}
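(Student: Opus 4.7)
The plan is to follow exactly the conic-programming duality argument used for Lemma~\ref{lemma1}, now applied to the joint measurement-channel cone. The first step is to recast the primal~\eqref{pair} as a conic program. Setting $t=1+s$, $\mathbb{N}=\frac{\mathbb{M}+s\mathbb{M}'}{1+s}$, and $\Phi=\frac{\Lambda+s\Lambda'}{1+s}$, a direct calculation shows that the existence of a valid $\{\mathbb{M}',\Lambda'\}\in\mathfrak{P}\times\mathfrak{C}$ realising $(\mathbb{N},\Phi)$ is equivalent to the pointwise positivities $tN_i-M_i\geq 0$ for every $i$ and $tJ_\Phi-J_\Lambda\geq 0$; the POVM normalisation $\sum_i M_i'=\1$ and the trace-preservation of $\Lambda'$ are inherited automatically from those of $\mathbb{M},\mathbb{N},\Lambda,\Phi$. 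Hence
\begin{align}
1+R_{MC}(\{\mathbb{M},\Lambda\})=\min\bigl\{\,t\,:\,\{\mathbb{N},\Phi\}\in(\mathfrak{P}\times\mathfrak{C})_{\rm com},\,tN_i\geq M_i\,\forall i,\,tJ_\Phi\geq J_\Lambda\bigr\}.
\end{align}
Because $(\mathfrak{P}\times\mathfrak{C})_{\rm com}$ is convex and compact, its positive hull $K$ is a closed convex cone, and rescaling $(\tilde{\mathbb{N}},\tilde\Phi)=t(\mathbb{N},\Phi)$ turns the above into a genuine conic program with linear objective $\mathrm{Tr}[\tilde{J}_\Phi]=t$ over $(\tilde{\mathbb{N}},\tilde\Phi)\in K$ subject to $\tilde N_i\geq M_i$ and $\tilde J_\Phi\geq J_\Lambda$.

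Second, I would form the Lagrangian with positive semidefinite multipliers $A_i$ on $\mathcal{H}$ (one per outcome) and $B$ on $\mathcal{K}\otimes\mathcal{H}$ enforcing the two positivity constraints:
\begin{align}
\mathcal{L}=\mathrm{Tr}[(\1-B)\tilde{J}_\Phi]-\sum_{i}\mathrm{Tr}[A_i\tilde{N}_i]+\sum_{i}\mathrm{Tr}[A_iM_i]+\mathrm{Tr}[BJ_\Lambda].
\end{align}
Since $K$ is a cone, the infimum of $\mathcal{L}$ over $K$ equals $\sum_i\mathrm{Tr}[A_iM_i]+\mathrm{Tr}[BJ_\Lambda]$ exactly when $\mathrm{Tr}[(\1-B)J_\Phi]-\sum_{i}\mathrm{Tr}[A_iN_i]\geq 0$ for every generator $(\mathbb{N},\Phi)\in(\mathfrak{P}\times\mathfrak{C})_{\rm com}$, and equals $-\infty$ otherwise. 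Using $\mathrm{Tr}[J_\Phi]=1$ for any channel, this condition is precisely $\sum_i\mathrm{Tr}[A_iN_i]+\mathrm{Tr}[BJ_\Phi]\leq 1$ for all compatible pairs. Maximising and subtracting~$1$ reproduces the right-hand side of~\eqref{lem2}.

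The final step is to promote the weak-duality inequality $R_{MC}\geq\max-1$ to equality via strong duality, which is where the compactness of $(\mathfrak{P}\times\mathfrak{C})_{\rm com}$ already established in the excerpt is essential. A Slater point is easy to exhibit: for any channel $\Phi_0$ the pair $(\{\1/o\}_{i=1}^{o},\Phi_0)$ is compatible via the instrument $\mathcal{I}_i(\rho)=\frac{1}{o}\Phi_0(\rho)$, and choosing $\Phi_0$ to be the completely depolarising channel makes $N_i=\1/o$ and $J_{\Phi_0}=\1/d^2$ strictly positive, so for sufficiently large $t$ the feasible set has non-empty interior. Standard conic duality then closes the gap. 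The only real obstacle is careful bookkeeping: verifying the primal reformulation in terms of Choi positivities, checking that the cone infimum reduces to a constraint over the compact generator set $(\mathfrak{P}\times\mathfrak{C})_{\rm com}$, and confirming Slater—none of which requires ideas beyond those used for Lemma~\ref{lemma1}.
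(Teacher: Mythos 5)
Your proposal is correct and follows essentially the same route as the paper, which itself proves Lemma~\ref{lemma2} by repeating the conic-programming dualization of Lemma~\ref{lemma1} and exhibiting the Slater point $\mathbb{N}=\{p(i)\1\}$, $\Phi(\cdot)=\1_{\mathcal{K}}/d'$ — the same uniform pair you use. The only differences are cosmetic (you absorb the cone constraint directly into the infimum rather than introducing a dual-cone multiplier, and you normalize the objective as ${\rm Tr}[\tilde{J}_\Phi]$ alone), and your explicit verification that the primal constraints reduce to $tN_i\geq M_i$ and $tJ_\Phi\geq J_\Lambda$ is actually more detailed than what the paper records.
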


We consider the state discrimination that Bob prepares two different ensembles 
$\mathscr{A}=\{p(x),\mathcal{E}_x\}_{x=1}^{2}$ on the extended Hilbert space 
$\mathcal{H}\otimes\mathcal{H}$ and Alice has access to a pair of a measurement and a channel $\{\mathbb{M},\Lambda\}\in\mathfrak{P}(\Omega,\mathcal{H})\times\mathfrak{C}(\mathcal{H}, \mathcal{K})$ as well as a single measurement $\mathbb{L}\in\mathfrak{P}(\Omega,\mathcal{K}\otimes\mathcal{H})$. If she receives the value $x=1$, she performs the measurement $\mathbb{M}\otimes\1$, and if she receives the value $x=2$, she performs 
the preprocessing $\Lambda\otimes{\rm id}$ and subsequently the measurement $\mathbb{L}$.
The average success probability is then given by 
\begin{align}
	&P_{\rm succ}(\mathscr{A}, \mathbb{L}, \{\mathbb{M}\otimes\1,\Lambda\otimes{\rm id}\}) \notag\\
	&\qquad  =\sum_{i}p(1)p(i|1){\rm Tr}[\rho_{i|1}M_i\otimes\1] \notag \\
	&\quad\qquad +\sum_{i}p(2)p(i|2){\rm Tr}[(\Lambda\otimes{\rm id})(\rho_{i|2})L_i].
\end{align}
 
 In this setting, we show that a similar relation to Eq.~(\ref{measurement}) and 
 Theorem~\ref{thm1} holds for incompatibility of measurement and channel.
 \begin{theorem}
 	\label{thm2}
	For any pair of a measurement and a channel $\{\mathbb{M},\Lambda\}\in
	\mathfrak{P}(\Omega,\mathcal{H})\times\mathfrak{C}(\mathcal{H},\mathcal{K})$, 
	the robustness of measurement and channel incompatibility quantifies the maximum advantage:{\rm
	\begin{align}
	  &\max_{\mathscr{A}, \mathbb{L}}
	  \dfrac{P_{\rm succ}(\mathscr{A}, \mathbb{L}, \{\mathbb{M}\otimes\1,\Lambda\otimes{\rm id}\})}
	  {\displaystyle\max_{\{\mathbb{N},\Phi\}\in
	  		(\mathfrak{P}\times\mathfrak{C})_{\rm com}}P_{\rm succ}(\mathscr{A}, \mathbb{L}, \{\mathbb{N}\otimes\1,\Phi\otimes{\rm id}\})} \notag \\[4pt]
	  &\qquad = 1+R_{MC}(\{\mathbb{M},\Lambda\}), 
	\end{align}}
    where the maximization is over all ensembles of states $\mathscr{A}$ and all measurements $\mathbb{L}$.
    \begin{proof}
        In the same way as Theorem~\ref{thm1}, it holds that 
    	\begin{align}
    	   &P_{\rm succ}(\mathscr{A}, \mathbb{L}, \{\mathbb{M}\otimes\1,\Lambda\otimes{\rm id}\}) \notag \\
    	   &\leq (1+R_{MC}(\{\mathbb{M},\Lambda\})) \notag\\
    	   &\quad \times\max_{\{\mathbb{N},\Phi\}\in(\mathfrak{P}\times\mathfrak{C})_{\rm com}}P_{\rm succ}(\mathscr{A}, \mathbb{L}, \{\mathbb{N}\otimes\1,\Phi\otimes{\rm id}\}).
    	\end{align}
    	In order to prove the converse, we choose the ensembles $\mathscr{A}$ and the measurement $\mathbb{L}$ as follows:
    	\begin{align}
    	   &\rho_{i|1}=\frac{A_i}{{\rm Tr}[A_i]}\otimes\sigma_i,\quad p(i|1)=\frac{{\rm Tr}[A_i]}{\sum_{i}{\rm Tr}[A_i]} \notag \\[1pt]
    	   &\rho_{1|2}=|\Psi_+\rangle\langle\Psi_+|,\quad \rho_{i|2}:{\rm arbitrary}\ (i=2,\cdots, o) \notag\\[1pt]
    	   &p(1|2)=1, \quad p(i|2)=0\ (i=2,\cdots,o) \notag\\[1pt]
    	   &p(1)=\frac{\sum_{i}{\rm Tr}[A_i]}{\sum_{i}{\rm Tr}[A_i]+\|B\|},\quad 
    	   p(2)=\frac{\|B\|}{\sum_{i}{\rm Tr}[A_i]+\|B\|} \notag \\[1pt]
    	   &L_1=\frac{B}{\|B\|},\ L_2=\1-\frac{B}{\|B\|},\ L_i=0\ (i=3,\cdots,o),
    	\end{align}
    	where $\{\{A_i\},B\}$ is the optimal solution for Eq.~(\ref{lem2}), and 
    	$\sigma_i$ is an arbitrary state for all $i$.
    	For this choice, it holds that 
    	\begin{align}
    	   &\dfrac{P_{\rm succ}(\mathscr{A}, \mathbb{L}, \{\mathbb{M}\otimes\1,\Lambda\otimes{\rm id}\})}
    	   {\displaystyle\max_{\{\mathbb{N},\Phi\}\in
    			(\mathfrak{P}\times\mathfrak{C})_{\rm com}}P_{\rm succ}(\mathscr{A}, \mathbb{L}, \{\mathbb{N}\otimes\1,\Phi\otimes{\rm id}\})} \notag\\
    	   &\quad=\dfrac{\sum_{i}{\rm Tr}[A_i M_i]+{\rm Tr}[B J_{\Lambda}]}
    	   {\displaystyle\max_{\{\mathbb{N},\Phi\}\in
    			(\mathfrak{P}\times\mathfrak{C})_{\rm com}}\sum_{i}{\rm Tr}[A_i N_i]+{\rm Tr}[B J_{\Phi}]} \notag\\
    	   &\quad\geq 1+R_{MC}(\{\mathbb{M},\Lambda\}),
    	\end{align}
    	where the inequality follows from Lemma~\ref{lemma2}.
    \end{proof}
\end{theorem}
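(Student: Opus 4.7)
The plan is to follow the same two-step pattern as Theorem \ref{thm1}: first obtain the upper bound directly from the definition of the robustness, then saturate it by promoting an optimizer of the dual program in Lemma \ref{lemma2} into an explicit discrimination instance. The task has a mild asymmetry compared to Theorem \ref{thm1} because Alice's resource is a heterogeneous pair $\{\mathbb{M},\Lambda\}$ rather than a uniform family of channels, so the bookkeeping has to mix a POVM-type contribution and a channel-type contribution in a single success probability.

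For the upper bound, I pick a compatible pair $\{\mathbb{N}',\Phi'\}\in(\mathfrak{P}\times\mathfrak{C})_{\rm com}$ that realizes $s=R_{MC}(\{\mathbb{M},\Lambda\})$. The feasibility condition in Eq.~(\ref{pair}) gives the operator inequalities $M_i\leq (1+s)N_i'$ and $\Lambda(\rho)\leq (1+s)\Phi'(\rho)$ for all $i$ and $\rho\in\mathcal{S}(\mathcal{H})$, which tensor with $\1$ and ${\rm id}$, respectively, to yield the same inequalities on the extended Hilbert space $\mathcal{H}\otimes\mathcal{H}$. Substituting term by term into the two contributions of $P_{\rm succ}(\mathscr{A},\mathbb{L},\{\mathbb{M}\otimes\1,\Lambda\otimes{\rm id}\})$ and then taking the maximum over $(\mathfrak{P}\times\mathfrak{C})_{\rm com}$ yields the $\leq$ direction exactly as in the channel case.

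For the converse I will transcribe the dual objective $\sum_i{\rm Tr}[A_iM_i]+{\rm Tr}[BJ_\Lambda]$ from Lemma \ref{lemma2} into a single success probability using the construction stated in the theorem. The first term is reproduced by sending, conditional on $x=1$, the product state $\rho_{i|1}=A_i/{\rm Tr}[A_i]\otimes\sigma_i$ with conditional weight proportional to ${\rm Tr}[A_i]$: because the measurement $\mathbb{M}\otimes\1$ ignores the second factor, a short trace computation returns $\sum_i{\rm Tr}[A_iM_i]/\sum_j{\rm Tr}[A_j]$. The second term is reproduced by sending, conditional on $x=2$, the maximally entangled state with certainty and measuring the outcome $L_1=B/\|B\|$; after the preprocessing $\Lambda\otimes{\rm id}$ this yields ${\rm Tr}[BJ_\Lambda]/\|B\|$. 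Choosing the outer weights $p(1),p(2)$ proportional to $\sum_i{\rm Tr}[A_i]$ and $\|B\|$ aligns the two contributions over a common denominator $\sum_i{\rm Tr}[A_i]+\|B\|$. This normalization is identical in numerator and denominator of the advantage ratio, so it cancels, and the ratio collapses to $\sum_i{\rm Tr}[A_iM_i]+{\rm Tr}[BJ_\Lambda]$ divided by the analogous quantity evaluated on an arbitrary compatible pair $(\mathbb{N},\Phi)$. By the dual feasibility constraint in Lemma \ref{lemma2} the denominator is at most one, and by optimality the numerator equals $1+R_{MC}(\{\mathbb{M},\Lambda\})$.

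The main obstacle is the packaging: writing a single $P_{\rm succ}$ that simultaneously exhibits the $A_i$ inside a POVM slot and the $B$ inside a Choi slot, with matching normalizations on the two sides of the ratio. The verifications that $L_1=B/\|B\|$ and $L_2=\1-B/\|B\|$ form a valid POVM (using $B\geq 0$ and the operator-norm bound) and that $A_i/{\rm Tr}[A_i]$ are valid states (using $A_i\geq 0$) are routine. The real insight is that weighting the two ensembles by $\sum_i{\rm Tr}[A_i]$ and $\|B\|$ produces a common normalization that cancels cleanly, allowing Lemma \ref{lemma2} to certify the bound $1+R_{MC}(\{\mathbb{M},\Lambda\})$ without any residual factors.
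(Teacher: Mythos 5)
Your proposal is correct and follows essentially the same route as the paper: the upper bound from the feasibility condition of Eq.~(\ref{pair}) (with the tensor extensions justified by complete positivity of the noise terms), and the converse by building the two ensembles and the measurement $\mathbb{L}$ from the dual optimizer $\{\{A_i\},B\}$ of Lemma~\ref{lemma2}, with the weights $\sum_i{\rm Tr}[A_i]$ and $\|B\|$ chosen so the common normalization cancels in the ratio. No gaps.
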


Thus, incompatibility of measurement and channel also has an operational 
interpretation in terms of state discrimination, that is, every incompatible pair of 
a measurement and a channel provides an advantage in a state discrimination task with two different ensembles of states. 

\subsection{quantum-to-classical channels and robustness of incompatibility}
Let us finally show that the robustness of measurement incompatibility and the robustness
of measurement and channel incompatibility are obtained as special cases from the 
robustness of channel incompatibility. We can regard a measurement $\mathbb{M}$ as a special kind of channel via a quantum-to-classical channel $\Gamma_{\mathbb{M}}$ defined as　
\begin{align}
	\Gamma_{\mathbb{M}}(\rho) = \sum_{i}{\rm Tr}[\rho M_i]|i\rangle\langle i|,
\end{align}
where $\{\ket{i}\}$ is an orthonormal basis for the output Hilbert space $\mathcal{K}$. 

It can be shown that measurements $\{\mathbb{M}_x\}$ are compatible if and only if 
the corresponding quantum-to-classical channels $\{\Gamma_{\mathbb{M}_x}\}$ 
are compatible \cite{Teiko2017}. Hence, we expect that the degree of incompatibility for 
$\{\mathbb{M}_x\}$ is also equivalent to the degree of incompatibility for 
$\{\Gamma_{\mathbb{M}_x}\}$.
We prove that this is the case.
\begin{proposition}
	\label{prop1}
	For any collection of measurements $\{\mathbb{M}_x\}\in\mathfrak{P}^{n}(\Omega,\mathcal{H})$, the robustness of 
	measurement incompatibility for $\{\mathbb{M}_x\}$ is equal to the robustness of 
	channel incompatibility for the corresponding quantum-to-classical channels 
	$\{\Gamma_{\mathbb{M}_x}\}$:
	\begin{align}
		R_M(\{\mathbb{M}_x\}) = R_C(\{\Gamma_{\mathbb{M}_x}\}).
	\end{align}
	\begin{proof}
		Since the mapping $\mathbb{M}_x \mapsto \Gamma_{\mathbb{M}_x}$ is affine, 
		the compatibility of $\left\{\frac{\mathbb{M}_x + s\mathbb{M}_x'}{1+s}\right\}$ leads to the compatibility of $\left\{\frac{\Gamma_{\mathbb{M}_x} + s\Gamma_{\mathbb{M}_x'}}{1+s}\right\}$. Hence, we have 
		\begin{align}
			R_M(\{\mathbb{M}_x\}) \geq R_C(\{\Gamma_{\mathbb{M}_x}\}).
		\end{align}
		In order to prove the converse, we suppose 
		$\{\Phi_x\}=\left\{\frac{\Gamma_{\mathbb{M}_x} + s\Lambda_x}{1+s}\right\}$
		are compatible. Then, it holds that 
		\begin{align}
			{\rm Tr}[\Gamma_{\mathbb{M}_x}(\rho)N_{i|x}] \leq 
			(1+s){\rm Tr}[\Phi_x(\rho)N_{i|x}] 
		\end{align} 
		for all $\rho$, $\{\mathbb{N}_x\}$, $x$ and $i$. 
		By taking the measurements $\{\mathbb{N}_x\}$ as $N_{i|x}=|i\rangle\langle i|$ 
		for all $i$ and $x$, we have 
		\begin{align}
			{\rm Tr}[\rho M_{i|x}] \leq 
			(1+s){\rm Tr}[\rho\Phi_x^{*}(|i\rangle\langle i|)] \quad \forall\ i,x,\rho,
		\end{align}
		that is, 
		\begin{align}
			M_{i|x} \leq (1+s)\Phi_x^{*}(|i\rangle\langle i|)\quad \forall\ i,x.
		\end{align}
		The measurements $\{\Phi_x^{*}(\mathbb{N}_x)\}$ are compatible due to 
		the compatibility of $\{\Phi_x\}$ \cite{Teiko2017}. 
		By defining measurements $\{\mathbb{M}_x'\}$ as 
		\begin{align}
		M_{i|x}'=\frac{(1+s)\Phi_{x}^{*}(|i\rangle\langle i|)-M_{i|x}}{s}\quad \forall\ i,x,
		\end{align}
		$\{\Phi_{x}^{*}(\mathbb{N}_x)\}$ can be written as 
		$\left\{\frac{\mathbb{M}_x+s\mathbb{M}_x'}{1+s}\right\}$.
		Therefore, we get 
		\begin{align}
			R_M(\{\mathbb{M}_x\}) \leq R_C(\{\Gamma_{\mathbb{M}_x}\}).
		\end{align}
	\end{proof}
\end{proposition}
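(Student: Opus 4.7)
The plan is to prove the equality by establishing the two inequalities separately, with the converse carrying the substantive content.

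For $R_M(\{\mathbb{M}_x\}) \geq R_C(\{\Gamma_{\mathbb{M}_x}\})$, I would exploit the affinity of the assignment $\mathbb{M}\mapsto\Gamma_{\mathbb{M}}$: any convex mixture $(\mathbb{M}_x + s\mathbb{M}_x')/(1+s)$ maps to $(\Gamma_{\mathbb{M}_x} + s\Gamma_{\mathbb{M}_x'})/(1+s)$. Combined with the cited equivalence between compatibility of POVMs and compatibility of the associated quantum-to-classical channels, this promotes any feasible point of the measurement-level optimisation into a feasible point of the channel-level optimisation at the same noise parameter $s$, so the infimum on the channel side is no larger.

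For the converse $R_M(\{\mathbb{M}_x\}) \leq R_C(\{\Gamma_{\mathbb{M}_x}\})$, I would pick an arbitrary feasible $(s,\{\Lambda_x\})$ so that $\Phi_x := (\Gamma_{\mathbb{M}_x} + s\Lambda_x)/(1+s)$ are compatible channels. Taking Heisenberg duals and evaluating against the diagonal projectors $|i\rangle\langle i|$ yields
\begin{equation*}
(1+s)\,\Phi_x^{*}(|i\rangle\langle i|) \;=\; M_{i|x} + s\,\Lambda_x^{*}(|i\rangle\langle i|),
\end{equation*}
so the natural candidate noise POVM is $M_{i|x}' := \Lambda_x^{*}(|i\rangle\langle i|)$. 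Positivity is automatic, and $\sum_i M_{i|x}' = \Lambda_x^{*}(\1) = \1$ because $\Lambda_x$ is trace-preserving, so $\{\mathbb{M}_x'\}$ is a valid POVM family and the identity rewrites as $\Phi_x^{*}(|i\rangle\langle i|) = (M_{i|x} + sM_{i|x}')/(1+s)$, exhibiting the smoothed POVM as a Heisenberg image.

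The main obstacle is justifying that $\{\Phi_x^{*}(|i\rangle\langle i|)\}$ is itself compatible as a family of POVMs, which is what is needed to conclude $R_M(\{\mathbb{M}_x\}) \leq s$. This is not automatic from the definition but follows from the standard fact, cited in the paper, that composing each channel in a compatible family with a fixed POVM produces a compatible family of POVMs: the joint channel witnessing compatibility of $\{\Phi_x\}$ can be composed with the computational-basis measurement and suitable classical post-processings to produce a single joint POVM recovering each $\Phi_x^{*}(|i\rangle\langle i|)$ as a marginal. Once this step is granted, taking the infimum over $s$ yields the desired inequality.
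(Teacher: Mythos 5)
Your proposal is correct and follows essentially the same route as the paper: the forward inequality via affinity of $\mathbb{M}\mapsto\Gamma_{\mathbb{M}}$, and the converse by pulling back a compatible smoothed channel family through the Heisenberg dual evaluated on the computational-basis projectors, using that compatible channels composed with fixed measurements yield compatible POVMs. Your identification $M_{i|x}'=\Lambda_x^{*}(|i\rangle\langle i|)$ is the same operator the paper writes as $\bigl((1+s)\Phi_x^{*}(|i\rangle\langle i|)-M_{i|x}\bigr)/s$, just with positivity and normalization made immediate rather than derived from the intermediate operator inequality.
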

Hence, Theorem~\ref{thm1} can not only become a witness of measurement incompatibility but also quantify precisely the robustness of measurement incompatibility via quantum-to-classical channels.

It can also be shown that a measurement $\mathbb{M}$ and a channel $\Lambda$ are 
compatible if and only if the corresponding quantum-to-classical channel $\Gamma_{\mathbb{M}}$ and $\Lambda$ are compatible \cite{Teiko2017}.  
\begin{proposition}
	\label{prop2}
	For any pair of a measurement and a channel 
	$\{\mathbb{M}, \Lambda\} \in \mathfrak{P}(\Omega,\mathcal{H})\times\mathfrak{C}(\mathcal{H},\mathcal{K})$, the robustness of measurement and 
	channel incompatibility for $\{\mathbb{M}, \Lambda\}$ is equal to the robustness 
	of channel incompatibility for $\{\Gamma_{\mathbb{M}},\Lambda\}$:
	\begin{align}
		R_{MC}(\{\mathbb{M}, \Lambda\})=R_C(\{\Gamma_{\mathbb{M}},\Lambda\}).
	\end{align}
	\begin{proof}
		In the same way as Proposition~\ref{prop1}, we have 
		\begin{align}
			R_{MC}(\{\mathbb{M}, \Lambda\})\geq R_C(\{\Gamma_{\mathbb{M}},\Lambda\}).
		\end{align}
		We suppose $\{\Phi_1,\Phi_2\} = \left\{\frac{\Gamma_{\mathbb{M}} + s\Phi'}{1+s},\frac{\Lambda + s\Lambda'}{1+s}\right\}$ are compatible. Then, as with Proposition~\ref{prop1}, 
		it holds that 
		\begin{align}
			M_i \leq (1+s)\Phi_{1}^{*}(|i\rangle\langle i|)\quad \forall\ i.
		\end{align}
		Let $\Phi$ be a joint channel of $\Phi_1$ and $\Phi_2$. We define an instrument $\{\mathcal{I}_i\}$ as 
		\begin{align}
			\mathcal{I}_{i}^{*}(A)=\Phi^{*}(|i\rangle\langle i|\otimes A) \quad \forall\ i.
		\end{align}
		Then 
		\begin{align}
			\mathcal{I}_{i}^{*}(\1)=\Phi_{1}^{*}(|i\rangle\langle i|),\quad 
			\sum_{i}\mathcal{I}_{i}^{*}(A)=\Phi_{2}^{*}(A)\quad \forall\ i,A,
		\end{align}
		hence, $\{\Phi_{1}^{*}(|i\rangle\langle i|) \}$ and $\Phi_2$ are compatible. 
		By defining a measurement $\mathbb{M}'$ as 
		\begin{align}
			M_i'=\frac{(1+s)\Phi_{1}^{*}(|i\rangle\langle i|)-M_i}{s}\quad \forall\ i,
		\end{align}
		$\{\Phi_{1}^{*}(|i\rangle\langle i|)\}$ can be written as $\frac{\mathbb{M}+s\mathbb{M}'}{1+s}$. Therefore, we get
		\begin{align}
			R_{MC}(\{\mathbb{M}, \Lambda\})\leq R_C(\{\Gamma_{\mathbb{M}},\Lambda\}).
		\end{align}
	\end{proof}
\end{proposition}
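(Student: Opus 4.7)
The plan is to mirror the two-direction strategy of Proposition~\ref{prop1}, with the quantum-to-classical channel $\Gamma_{\mathbb{M}}$ playing the role of the POVM while $\Lambda$ is transported along unchanged. For the easy inequality $R_{MC}(\{\mathbb{M},\Lambda\}) \geq R_C(\{\Gamma_{\mathbb{M}},\Lambda\})$, I would use that the map $\mathbb{M}\mapsto\Gamma_{\mathbb{M}}$ is affine in $\mathbb{M}$: any decomposition feasible for~(\ref{pair}) pushes forward to a decomposition of $\{\Gamma_{\mathbb{M}},\Lambda\}$ as a mixture with a compatible channel pair, using the result of~\cite{Teiko2017} that compatibility of $\{\mathbb{M},\Lambda\}$ is equivalent to compatibility of $\{\Gamma_{\mathbb{M}},\Lambda\}$. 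Every feasible $s$ on the measurement-channel side is therefore feasible on the channel side.

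For the converse, I would take any $s\geq 0$ and channels $\Phi',\Lambda'$ such that $\{\Phi_1,\Phi_2\} := \{\frac{\Gamma_{\mathbb{M}}+s\Phi'}{1+s}, \frac{\Lambda+s\Lambda'}{1+s}\}$ admits a joint channel $\Phi$. The key move is to convert this channel compatibility into measurement-channel compatibility by defining an instrument $\{\mathcal{I}_i\}$ through its Heisenberg picture, $\mathcal{I}_i^{*}(A) := \Phi^{*}(|i\rangle\langle i|\otimes A)$. One then checks $\mathcal{I}_i^{*}(\1) = \Phi_1^{*}(|i\rangle\langle i|)$ and $\sum_i \mathcal{I}_i^{*} = \Phi_2^{*}$, which witnesses that the POVM $\{\Phi_1^{*}(|i\rangle\langle i|)\}$ is compatible with the channel $\Phi_2$. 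Next, the pointwise dominance $\Gamma_{\mathbb{M}} \leq (1+s)\Phi_1$ pulled through the Heisenberg picture against $|i\rangle\langle i|$ gives $M_i \leq (1+s)\Phi_1^{*}(|i\rangle\langle i|)$, so $M_i' := \frac{(1+s)\Phi_1^{*}(|i\rangle\langle i|)-M_i}{s}$ defines a valid POVM (positivity from the inequality; $\sum_i M_i' = \1$ from $\Phi_1^{*}(\1) = \1$ together with $\sum_i |i\rangle\langle i| = \1$ and $\sum_i M_i = \1$). With this $\mathbb{M}'$, the pair $\{\frac{\mathbb{M}+s\mathbb{M}'}{1+s},\Phi_2\}$ is compatible with the same $s$, giving $R_{MC}(\{\mathbb{M},\Lambda\}) \leq R_C(\{\Gamma_{\mathbb{M}},\Lambda\})$.

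The main obstacle I anticipate is the instrument construction: one must verify that each $\mathcal{I}_i$ is genuinely completely positive and trace-non-increasing, so that $\{\mathcal{I}_i\}$ is an instrument implementing the compatibility of $\{\Phi_1^{*}(|i\rangle\langle i|)\}$ with $\Phi_2$. In the Schr\"odinger picture, $\mathcal{I}_i$ acts as $\rho\mapsto\mathrm{Tr}_{\mathcal{K}_1}[(|i\rangle\langle i|\otimes\1)\Phi(\rho)]$, which makes both properties manifest from complete positivity of $\Phi$ and positivity of $|i\rangle\langle i|\otimes\1$. After this, the remaining steps amount to an algebraic verification parallel to Proposition~\ref{prop1}, and combining the two inequalities yields the stated equality.
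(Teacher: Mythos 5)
Your proposal is correct and follows essentially the same route as the paper: the affine push-forward for the easy inequality, and for the converse the same instrument $\mathcal{I}_i^{*}(A)=\Phi^{*}(|i\rangle\langle i|\otimes A)$ built from a joint channel $\Phi$, followed by the same definition of $\mathbb{M}'$. The only difference is that you explicitly verify that each $\mathcal{I}_i$ is completely positive and trace-non-increasing and that $\mathbb{M}'$ is a valid POVM, details the paper leaves implicit.
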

According to Proposition~\ref{prop1} and Proposition~\ref{prop2}, we can see that Theorem~\ref{thm1} 
includes incompatibility of measurements and incompatibility of measurement 
and channel as special cases. However, Theorem \ref{thm2} is not  a mere corollary of Theorem \ref{thm1} and Proposition \ref{prop2} because Theorem \ref{thm1} and Theorem \ref{thm2} are different in the way to maximize the denominators.
\section{conclusions}
\label{conc}
In this paper, we have shown that the robustness of channel incompatibility exactly quantifies the maximum advantage in a state discrimination task with multiple ensembles of states as with the robustness of measurement incompatibility.
This shows that every collection of incompatible channels can be more useful than compatible ones. 
In addition, unlike in the case of measurement incompatibility, it has been proven by an example that entanglement is essential. We have also proven that a similar relation holds for the robustness of measurement and 
channel incompatibility.
Therefore, it has been shown that every kind of quantum incompatibility is operationally characterized by quantum state discrimination. 
This enables us to see quantum incompatibility as a kind of useful resource.

\section*{Note added}
Recently, we have become aware of independent works related to our Theorem~\ref{thm1} by R. Uola {\em et al.} \cite{RoopeChan} and by C. Carmeli {\em et al.} \cite{CarmeliWit}. 
In both works, the authors show that a collection of channels is incompatible 
if and only if it offers an advantage over all compatible ones in quantum information games. In particular, in Ref.~\cite{RoopeChan} the authors consider 
input-output games that do not need entanglement but differ from our simple state discrimination games.
They prove that the robustness of channel incompatibility quantifies the maximum advantage in those games. Although in Ref.~\cite{CarmeliWit} the authors do not provide 
a quantitative relation between the robustness and the advantage, 
it gives a necessary and sufficient condition for channel incompatibility 
in terms of state discrimination without entanglement.
On the other hand, as we have shown, entanglement is necessary for characterizing the robustness in terms of state discrimination.    

\appendix
\section{proof of Lemma 1}
\label{A}
	From Eq.~(\ref{robust2}), 
	\begin{align}
	R_C(\{\Lambda_x\}) = &\min{s} \notag \\
	& \ \ {\rm s.t.}\ \  J_{\Lambda_x} \leq (1+s)J_{\Phi_x} 
	\quad \forall x,  \notag \\
	&\qquad \ \ \{J_{\Phi_x}\}\in\mathcal{J}^{n}_{\rm com}(\mathcal{K}\otimes\mathcal{H}). 
	\end{align}
	By defining $\{\tilde{J}_{\Phi_x}\}$ with $\tilde{J}_{\Phi_x}:=(1+s)J_{\Phi_x}$ and using ${\rm Tr}[J_{\Phi_x}]=1$, $R_C(\{\Lambda_x\})$ can be equivalently written as
	\begin{align}
	\label{primal}
	R_C(\{\Lambda_x\}) = &\min{\sum_{x}
		\frac{{\rm Tr}[\tilde{J}_{\Phi_x}]}{n}-1} \notag \\
	& \ \ {\rm s.t.}\ \  J_{\Lambda_x} \leq \tilde{J}_{\Phi_x} 
	\quad \forall x,  \notag \\
	&\qquad \ \ \{\tilde{J}_{\Phi_x}\}\in {\rm cone} \bigl(\mathcal{J}^{n}_{\rm com}
	(\mathcal{K}\otimes\mathcal{H})\bigr),	 
	\end{align}
	where ${\rm cone} \bigl(\mathcal{J}^{n}_{\rm com}(\mathcal{K}\otimes\mathcal{H})\bigr)$ is the cone generated by $\mathcal{J}^{n}_{\rm com}(\mathcal{K}\otimes\mathcal{H})$. Since $\mathcal{J}^{n}_{\rm com}(\mathcal{K}\otimes\mathcal{H})$ is compact, ${\rm cone} \bigl(\mathcal{J}^{n}_{\rm com}(\mathcal{K}\otimes\mathcal{H})\bigr)$ is closed. Therefore, we can use the conic programming. Define the Lagrangian associated to this optimization problem
	\begin{align}
	\mathcal{L} &= \sum_{x}\frac{{\rm Tr}[\tilde{J}_{\Phi_x}]}{n}-1 \notag\\
	&\quad -\sum_{x}{\rm Tr}[A_x(\tilde{J}_{\Phi_x}-J_{\Lambda_x})]
	-\sum_{x}{\rm Tr}[B_x\tilde{J}_{\Phi_x}] \notag \\
	&=\sum_{x}{\rm Tr}[A_x J_{\Lambda_x}]-1+\sum_{x}{\rm Tr}\left[\left(\frac{\1}{n}-A_x-B_x\right)\tilde{J}_{\Phi_x}\right],
	\end{align}
	where $A_x$ is positive semidefinite for all $x$ and $\{B_x\}$ is the element of the dual cone of ${\rm cone} \bigl(\mathcal{J}^{n}_{\rm com}(\mathcal{K}\otimes\mathcal{H})\bigr)$. The Lagrangian becomes independent of the primal variables if we restrict to dual variables that satisfy $A_x+B_x=\1/n$ for all $x$. Therefore, the dual form of Eq.~(\ref{primal}) becomes
	\begin{align}
	\label{lem}
	&\max_{\{A_x\}}{\sum_{x}{\rm Tr}[A_x J_{\Lambda_x}]-1} \notag \\[2pt]
	& \ \ {\rm s.t.}\ \  A_x\geq 0 \quad \forall x,  \notag \\ 
	&\qquad \ \sum_{x}{\rm Tr}[A_x J_{\Phi_x}]\leq 1 \quad \forall\{\Phi_x\}\in\mathfrak{C}^{n}_{\rm com}(\mathcal{H}, \mathcal{K}).
	\end{align}
	The second constraint of Eq.~(\ref{lem}) follows from $\sum_{x}{\rm Tr}[B_x J_{\Phi_x}]\geq 0 $ for all $\{J_{\Phi_x}\}\in \mathcal{J}^{n}_{\rm com}(\mathcal{K}\otimes\mathcal{H})$. The optimal values of the primal and the dual form coincide if strong duality holds. This is satisfied if Eq.~(\ref{primal}) is finite and satisfies Slater's condition. It is clear that Eq.~(\ref{primal}) is finite 
	because the robustness is finite. 
	We define $\{\Phi_x\}\in\mathfrak{C}^{n}(\mathcal{H}, \mathcal{K})$ as
	$\Phi_x(\cdot) = \1_{\mathcal{K}}/d'$, where $d'$ is the dimension of $\mathcal{K}$.
	$\{\Phi_x\}$ are obviously compatible. The corresponding Choi matrices are given by
	$J_{\Phi_x}=\1/dd'$. 
	Therefore, $\{J_{\Phi_x}\}$ can be multiplied by a sufficiently large positive number
	to be the strictly feasible point of the first constraint of Eq.~(\ref{primal}).
	Moreover, $\{J_{\Phi_x}\}$ is the interior point of 
	${\rm cone} \bigl(\mathcal{J}^{n}_{\rm com}(\mathcal{K}\otimes\mathcal{H})\bigr)$.
	Hence, the optimal values of both problems coincide.
	\\
	
\section{proof of Lemma 2}
\label{B}
	The proof is the same as Lemma~\ref{lemma1}. That is, by using the conic programming, we can obtain the dual problem Eq.~(\ref{lem2}). To fulfill Slater's condition, we may 
	take the point $\mathbb{N}=\{p(i)\1\}$ and $\Phi(\cdot)=\1_{\mathcal{K}}/d'$, where $p(i)$ is the 
	probability distribution. They are obviously compatible.

\section{proof of Eq.~(\ref{entan})}
\label{C}
In order to evaluate the maximum advantage in the entanglement-unassisted state discrimination task, 
we introduce the optimal cloning channel $\mathcal{C}\in\mathfrak{C}(\mathcal{H},\mathcal{H}\otimes\mathcal{H})$ of the form \cite{keyl1999optimal}
\begin{align}
\mathcal{C}(\rho)=\frac{2}{d+1}S(\rho\otimes\1)S,
\end{align}
where $S$ is the projection from $\mathcal{H}\otimes\mathcal{H}$ to the symmetric subspace of $\mathcal{H}\otimes\mathcal{H}$.
The two channels $\{\mathcal{C}_1,\mathcal{C}_2\}$ obtained as the corresponding marginals of $\mathcal{C}$ read 
\begin{align}
\mathcal{C}_1(\rho)=\mathcal{C}_2(\rho)=c(d)\rho+(1-c(d))\frac{\1}{d},
\end{align}
where the number $c(d)$ is given by
\begin{align}
c(d)=\frac{d+2}{2(d+1)}.
\end{align}
Since the two channels $\{\mathcal{C}_1,\mathcal{C}_2\}$ are compatible by their construction, we have
\begin{widetext}
	\begin{align}
	\max_{\mathscr{A}, \{\mathbb{M}_x\}}\dfrac{P_{\rm succ}
		(\mathscr{A}, \{\mathbb{M}_x\}, \{{\rm id},{\rm id}\})}
	{\displaystyle\max_{\{\Phi_x\}\in\mathfrak{C}^{2}_{\rm com}(\mathcal{H}, \mathcal{H})}P_{\rm succ}(\mathscr{A}, \{\mathbb{M}_x\}, \{\Phi_x\})}
	&\leq \max_{\mathscr{A}, \{\mathbb{M}_x\}}\dfrac{P_{\rm succ}
		(\mathscr{A}, \{\mathbb{M}_x\}, \{{\rm id},{\rm id}\})}
	{P_{\rm succ}(\mathscr{A}, \{\mathbb{M}_x\}, \{\mathcal{C}_1,\mathcal{C}_2\})} \notag\\
	&= \max_{\mathscr{A}, \{\mathbb{M}_x\}}\frac{\displaystyle\sum_{x=1}^{2}\sum_{i=1}^{o}p(x)p(i|x){\rm Tr}[\rho_{i|x}M_{i|x}]}
	{\displaystyle\sum_{x=1}^{2}\sum_{i=1}^{o}p(x)p(i|x)\left(c(d){\rm Tr}[\rho_{i|x}M_{i|x}]+
		\frac{1-c(d)}{d}{\rm Tr}[M_{i|x}]\right)} \notag\\
	&= \max_{\{p(x), p(i|x)\}, \{\mathbb{M}_x\}}\frac{\displaystyle\sum_{x=1}^{2}\sum_{i=1}^{o}p(x)p(i|x)\|M_{i|x}\|}
	{\displaystyle\sum_{x=1}^{2}\sum_{i=1}^{o}p(x)p(i|x)\left(c(d)\|M_{i|x}\|+
		\frac{1-c(d)}{d}{\rm Tr}[M_{i|x}]\right)} \notag\\
	&\leq  \max_{\{p(x), p(i|x)\}, \{\mathbb{M}_x\}}\frac{\displaystyle\sum_{x=1}^{2}\sum_{i=1}^{o}p(x)p(i|x)\|M_{i|x}\|}
	{\left(c(d)+\frac{1-c(d)}{d}\right)\displaystyle\sum_{x=1}^{2}\sum_{i=1}^{o}p(x)p(i|x)\|M_{i|x}\|} \notag\\
	&=\frac{1}{c(d)+\frac{1-c(d)}{d}}=\frac{2(d+1)}{d+3},
	\end{align} 
\end{widetext}
where the third line is because the function $f(x)=\frac{x}{ax+b}\ (a,b>0)$ is monotonically increasing in $x\geq 0$ 
and the fourth line is due to the fact $\|M_{i|x}\|\leq{\rm Tr}[M_{i|x}]$.

%\bibliographystyle{apsrev4-1}
%\bibliography{ref}

\end{document}